\newtheorem{definition}{Definition}
\newtheorem{exmp}{Example}[section]
\newtheorem{theorem}{Theorem}
\newtheorem{lem}[theorem]{Lemma}
\newtheorem{remark}{Remark}
\newtheorem{corollary}{Corollary}[theorem]
\title{\LARGE \bf
Evolutionary Game Dynamics for Two Interacting Populations
under Environmental Feedback
}
\author{Lulu Gong$^{1}$, Jian Gao$^{2}$, and Ming Cao$^{1}$ 
\thanks{The work of Gong and Gao was supported in part by China Scholarship Council (CSC). }
\thanks{$^{1}$ L. Gong and M. Cao are with ENTEG,  Faculty of Science and Engineering,
        University of Groningen, 9747 AG, Groningen, The Netherlands
        {\tt\small l.gong@rug.nl, m.cao@rug.nl}}%
\thanks{$^{2}$ J. Gao is with JBI, Faculty of Science and Engineering, University of Groningen, 9747 AG, Groningen, The Netherlands
	{\tt\small jian.gao@rug.nl}}%
}
\begin{document}

\maketitle
\thispagestyle{empty}
\pagestyle{plain}

\begin{abstract}

We study the evolutionary dynamics of games under environmental feedback using replicator equations for two interacting populations. One key feature is to consider jointly the co-evolution of the dynamic payoff matrices and the state of the environment: the payoff matrix varies with the changing environment and at the same time, the state of the environment is affected indirectly by the changing payoff matrix through the evolving population profiles. For such co-evolutionary dynamics, we investigate whether convergence will take place, and if so, how.  In particular, we identify the scenarios where oscillation offers the best predictions of long-run behavior by using reversible system theory. The obtained results are useful to describe the evolution of multi-community societies in which individuals' payoffs and societal feedback interact.

\end{abstract}

\section{INTRODUCTION}
Evolutionary game theory \cite{c1} is widely used to model population dynamics for social and ecological systems since it offers insightful results under meaningful simplification. There exist various classic models, among which the replicator dynamics play a prominent  role \cite{c2}; in this model, the individuals in a well-mixed population play games with each other according to a mutually known payoff matrix . 
When the collective of individuals  can be divided into several populations according to certain constraints, such as local interactions or genetic relationships, multi-population games take place. Then each population can be taken as a cohesive community whose members play games with players from other populations according to corresponding payoff matrices \cite{c3,c4}. The replicator dynamics for  the multi-population  evolutionary games have been developed in  \cite{c5,c6,c7}.

On the other hand, in classic evolutionary game theory, it is generally assumed that the related payoff matrix is constant and not affected by the evolution itself. That is to say, the players will receive pre-determined  payoffs whenever every player has chosen a candidate strategy for the current game play.  In practical scenarios from ecological systems or human society, however, the incentives or punishments for  the individuals in a game  may dynamically change because of changing environment \cite{c8,c9,c10,c11}. 
In public goods dilemmas where the use of common resource, such as water and pasture, is involving, the contest in real population not only modifies the social composition, but also may have a marked effect on the value of subsequent rewards \cite{c12}. If the shared resource is limited, ``the tragedy of the commons" will be the inevitable fate for all players \cite{c13}. However, in some practical situations, the outcome of such evolutionary public goods games can be much more complicated when the shared resources are  not only  affected by the actions of individuals but also act back on the strategic choices of the populations through influencing the payoffs in the game process \cite{c14,c15,c16}. In fact, some actions may be in favor of enhancing the environment while the others weaken it. On the other hand, the environmental context can also in turn have influence on the individual actions by changing the current payoff matrices. Thus a feedback mechanism arises from the environmental state to the game dynamics. 

So far, most of the existing works have been focusing on the dynamics of population profiles under the given fixed payoff matrices, but environmental feedback on game dynamics has not been taken into account adequately. Recently, the environment factors have received surging interests, and their influence is attracting more and more attention \cite{c17}. The co-evolution of strategies and games  have been studied in \cite{c18}, which shows the path to the collapse of cooperation; in particular, the evolvability of payoffs in a fixed environment is discussed, but the direct relation between payoffs and environment is  not considered. It is studied in \cite{c19} how a single population evolves with a changing environmental resource using replicator equations. 
The corresponding system therein exhibits interesting oscillating behavior when the game is characterized by a payoff matrix modified from the prisoner dilemma. Indeed, the single population replicator equations combined with an environment factor take the form of an integrable system which admits constants of motion. Hence, it can be  proved that the periodic orbits exist using level sets of the corresponding energy function.

When the structure of population is extended from a single population to multiple populations, it is of great interest to study whether convergence will take place in such situation, and if so, how. In addition, it will be more challenging to identify complicated dynamics in such a multi-dimensional system. The main contributions of the paper are as follows. 
The dynamic and environment-dependent payoff matrices are utilized to indicate the feedback of environment to game dynamics. Then 
 we consider  the multi-population game dynamics  and the environment state simultaneously and obtain a model of a closed-loop and coupled system. For two interacting populations case, we derive sufficient conditions for the convergence to boundary points. 
 More importantly, we identify the scenarios where oscillation offers the best predictions of long-run behavior by using reversible system theory.


The rest of the paper is organized as follows.
Section II introduces how the game and environment dynamics are modeled with necessary background information.  
Section III provides the main results: First, the conditions for convergence in the closed-loop system is analyzed with an illustrative example; then more complicated dynamics are analyzed thoroughly  when the payoffs are in the form of a modified prisoner's dilemma; and lastly, a qualitative approach is used to prove the resulting periodic orbits by applying reversible system theory. In the last section, we conclude our contribution and discuss the directions for further study.


\section{BACKGROUND AND PROBLEM STATEMENT}

\subsection{Replicator dynamics}

As one of the most well-known models in evolutionary game theory, the \textit{replicator dynamics}  describe the evolution of the frequencies of strategies in a well-mixed population \cite{c2}.
Consider a matrix game with a finite set of $m$ pure strategies, $\{s_{1},\ldots,s_{m}\}$, and the entries of the  payoff matrix $A$ are $a_{ij}$ for $i,j=1,\ldots,m$. Let $p_i$ be the proportion of the individuals who choose $s_i$, and denote $ p = [p_1, \ldots, p_m]^T$. Then the single-population replicator dynamics are determined by setting the  growth rate $ \dot{p}_{i}$  proportional to the difference between the expected utility of $s_i$ and the average utility in the whole population, namely
\begin{equation}
\dot{p}_{i}=p_{i}[U_{i}(p)-\bar{U}(p)],
\end{equation}
where $U_{i}(p)=(Ap)_{i}$ is $s_i$'s and  $\bar{U}(p)=p^{T}Ap$ is the average utility. Note that the replicator dynamics (1) are defined on the simplex $\Delta = \{p\vert \sum_i p_i =1\}$. It has been proved \cite{c2} that  $ \Delta $ is invariant under (1).

Now we extend the single-population replicator dynamics (1) to the multi-population case. Consider an $n$-population system and its replicator dynamics
\begin{equation}\label{eqn:replicator}
\dot{p}_{i}^{k}=p_{i}^{k}[U_{i}^{k}(\mathbf{p})-\bar{U}^{k}(\mathbf{p})], 
\end{equation} 
where   $ p_{i}^{k} $ is the proportion of individuals in population $k$, $ k=1,\ldots,n $, who are currently using $s_i$ and $ \mathbf{p}=[p^{1},\ldots,p^{n}]^T $, where $p^k$ is determined by the proportions of the players of different strategies in population $k$. It can be easily verified that the state space of the whole population, which becomes a  polyhedron, denoted by $ \Theta $, is invariant under (2), so do $ \Theta $'s interior and boundary respectively.


\subsection{Dynamic payoff matrices}
To model the feedback of the environment to game dynamics, we  introduce the dynamic payoff matrices. According to \cite{c11,c19}, changes in the richness of biological environments, or the economic situation of governing bodies in social settings, sometimes can be captured by varying the payoff matrices by a multiplication factor. Therefore, we consider the  matrix games where the payoff varies with a scalar environment variable $r$, which will be explained in detail in the coming subsection. In particular, we assume that the entries of each payoff matrix depend affinely  on $ r $, i.e., 
\begin{equation}\label{eqn:dynamic payoff}
\begin{bmatrix}
a_{11}r+b_{11}&\ldots&a_{1m}r+b_{1m}\\
\vdots&\ddots&\vdots\\
a_{m1}r+b_{m1}&\ldots&a_{mm}r+b_{mm}
\end{bmatrix}.
\end{equation}

\subsection{Environmental factor}

To address in depth the environment's influence on game dynamics and vice versa,  the environment in \cite{c19} is modeled by a scalar function coupled with game dynamics. We will use this mechanism to consider the multi-population games and environment dynamics. To be more specific, we represent the change of the environmental resource $ r $ by a continuous scalar function, i.e.,
\begin{equation}\label{eqn:envi}
\dot{r}=r(1-r)h(\mathbf{p}),
\end{equation}
where the $r$ is rescaled to be confined in the range $ [0,1] $; $h(\mathbf{p})$ denotes the feedback of the individuals' actions on the environment. The environmental state changes as a result of the states of the populations and the sign of $ h(\mathbf{p})$ determines whether $ r $ will decrease or increase, corresponding to environmental degradation or enhancement, respectively.  



\subsection{Mathematical model}
For the sake of simplicity,  
we start by considering the case of two interacting populations.  In this case, each individual from population $ k=1,2 $  only interacts with an individual  from the other population. Such games are usually referred to as \textit{bi-matrix games}. In addition, we assume that each population $ k $ has only two strategies $ \{s_{1},s_{2}\} $. Since there are two available strategies in each population, the population profile can be defined by $ p^{k}=[p_{1}^{k}~~1-p_{1}^{k}]^T $ where $ p_{1}^{k} $ is the frequency of strategy $ s_{1} $  and  $ 1-p_{1}^{k}$  the frequency of strategy $ s_{2} $ in population $ k $. Thus, in every population we only need to focus on the evolution of the proportion of one strategy. Hereafter we denote $ x=p_{1}^{1} $ and $  y=p_{1}^{2} $. We consider the strategy set in the context of the cooperation-defection game as a means to motivate our analysis, i.e., the proportion of the first strategy, cooperation, enriches the resource and the proportion of the second strategy, defection, consumes the resource.  

The interactions between the two populations are described by the following dynamic payoff matrices $ A(r) $ and $ B(r) $:
\begin{equation*}
A(r)=\begin{bmatrix}
a(r)&b(r)\\
c(r)&d(r)
\end{bmatrix} ~~\mathrm{and}~~ B(r)=\begin{bmatrix}
e(r)&f(r)\\
g(r)&h(r)
\end{bmatrix}.
\end{equation*}
Note that generally $A(r)\neq B(r) $, and thus the games are asymmetric. More precisely, if a player in population $ 1 $ plays $ s_{1} $ against a player in population $ 2 $ playing $ s_{2} $, then the first player receives the payoff $ b(r) $ and the second player obtains  the payoff $ g(r) $.
Then the utility functions are given by
\begin{equation}
U_{i}^{1}=\left( A(r)\mathbf{y}\right)_{i},\quad U_{i}^{2}=\left( B(r)\mathbf{x}\right)_{i},
\end{equation}
where $ \mathbf{y}=\begin{bmatrix}
y&1-y
\end{bmatrix}^{T} $ and $ \mathbf{x}=\begin{bmatrix}
x&1-x
\end{bmatrix}^{T} $.

Combine the game dynamics (\ref{eqn:replicator}) and environment equation (\ref{eqn:envi}), then we have the corresponding 3-dimensional co-evolutionary system
\begin{equation}\label{eqn:cosys1}
\begin{cases}
&\dot{x}=x(1-x)f(y,r)\\
&\dot{y}=y(1-y)g(x,r)\\
&\dot{r}=r(1-r)h(x,y),
\end{cases}
\end{equation}
where 
\begin{equation}
\begin{aligned}
f(y,r)&=U^{1}_{1}-U^{2}_{1}\\
&=(a(r)-b(r)-c(r)+d(r))y+b(r)-d(r),
\end{aligned}
\end{equation}
and
\begin{equation}
\begin{aligned}
g(x,r)&=U^{1}_{2}-U^{2}_{2}\\
&=(e(r)-f(r)-g(r)+h(r))x+f(r)-h(r).
\end{aligned}
\end{equation}
In addition, we assume $h(x,y)$ is of the following form:
\begin{equation}
\begin{aligned}
h(x,y)&=\theta_1 x-(1-x) +\theta_2 y -(1-y)\\
&=(1+\theta_{1})x+ (1+\theta_{2})y-2,
\end{aligned}
\end{equation}
in which $\theta_{1}>0$ and $\theta_{2}>0$ represent the ratios of the enhancement rate to degradation rate.

Compared with the conventional replicator dynamics (\ref{eqn:replicator}), this model is more complicated because the payoff matrices are environment dependent and  the strategies and environment dynamics are deeply coupled. One immediately can get some basic properties of this new system. Since all the variables, $ x $, $ y $ and $ r $, just vary in the range $ [0, 1] $, the domain for the whole system (\ref{eqn:cosys1}) is exactly the unit cube $ [0, 1]^{3} $ in the Cartesian coordinates. And through the structure of equations (\ref{eqn:cosys1}) one can easily prove the following lemmas.  

\begin{lem}[Fixed points]
The system (\ref{eqn:cosys1}) has eight obvious fixed points which are exactly on the eight corners of the cubic domain, i.e., $ (0,0,0) $, $ (1,0,0) $, $ (0,1,0) $, $ (0,0,1) $, $ (1,1,0) $, $ (1,0,1) $, $ (0,1,1) $ and $ (1,1,1) $. 
\end{lem}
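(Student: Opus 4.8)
The plan is to locate the equilibria by setting the right-hand side of (\ref{eqn:cosys1}) to zero and exploiting the common factored structure shared by all three equations. Each component of the vector field is a product of a boundary factor --- namely $x(1-x)$, $y(1-y)$, or $r(1-r)$ --- and one of the scalar functions $f(y,r)$, $g(x,r)$, or $h(x,y)$. A triple $(x,y,r)$ is therefore a fixed point exactly when all three of these products vanish simultaneously.

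First I would note that $x(1-x)=0$ holds if and only if $x\in\{0,1\}$, and similarly that the other two boundary factors vanish precisely when $y\in\{0,1\}$ and $r\in\{0,1\}$. Consequently, if one selects each coordinate from its two-element vanishing set, then \emph{every} boundary factor is zero, so the entire right-hand side of (\ref{eqn:cosys1}) vanishes identically --- regardless of the values taken by $f$, $g$, and $h$ at that point, since each of these is multiplied by a vanishing factor. This is the crux of the argument: one avoids solving any equation involving the environment-dependent payoff entries. Enumerating the independent binary choices for $x$, $y$, and $r$ yields $2\times 2\times 2 = 8$ points, which are precisely the eight vertices of the cube $[0,1]^3$ listed in the statement, and a direct substitution confirms each is indeed a fixed point.

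There is no genuine obstacle here; the result is an immediate consequence of the product form of (\ref{eqn:cosys1}). The only point deserving emphasis --- and the reason for the qualifier ``obvious'' --- is that these eight vertices need not constitute the \emph{complete} equilibrium set. Further fixed points may arise on the faces, edges, or interior of the cube wherever one or more of the factors $f(y,r)$, $g(x,r)$, $h(x,y)$ vanishes while the corresponding boundary factor does not; such equilibria depend on the explicit forms of $f$, $g$, and $h$ and are deferred to the subsequent convergence and oscillation analysis. The present lemma asserts only the existence of the eight corner equilibria, which the factored structure delivers at once.
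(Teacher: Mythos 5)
Your argument is correct and is exactly the one the paper intends: the paper gives no explicit proof, stating only that the lemma follows "through the structure of equations (\ref{eqn:cosys1})," i.e., the product form with the boundary factors $x(1-x)$, $y(1-y)$, $r(1-r)$. Your additional caveat that the eight corners need not exhaust the equilibrium set is also consistent with the paper, which later identifies further critical sets and interior equilibria for specific payoff choices.
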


\begin{lem}
[Invariant sets] For  system  (\ref{eqn:cosys1}),  the following statements hold 
\begin{itemize}
	\item the whole domain $ [0,1]^{3} $ is a positively invariant set;
	\item the 6 faces of this cube  are positively invariant sets.
\end{itemize}  
\end{lem}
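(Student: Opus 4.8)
The plan is to prove both invariance claims directly from the structure of the vector field in \eqref{eqn:cosys1}, exploiting the fact that each of the three equations carries a multiplicative factor that vanishes on the relevant boundary. The key observation is that the right-hand side of $\dot{x}=x(1-x)f(y,r)$ has the form ``boundary factor $\times$ bounded smooth term,'' and similarly for $\dot{y}$ and $\dot{r}$; this is precisely what forces trajectories to respect the faces of the cube.

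First I would establish that each coordinate plane bounding the cube is invariant by inspecting one equation at a time. Consider the face $\{x=0\}$. On this face $\dot{x}=0\cdot(1-0)\cdot f(y,r)=0$, so the flow has no component in the $x$-direction there; any trajectory starting with $x=0$ keeps $x=0$ for all time. The same argument applied to $\dot{x}$ on $\{x=1\}$ gives $\dot{x}=1\cdot 0\cdot f(y,r)=0$, so $\{x=1\}$ is invariant as well. Repeating this for the factors $y(1-y)$ in $\dot{y}$ and $r(1-r)$ in $\dot{r}$ handles the remaining four faces $\{y=0\}$, $\{y=1\}$, $\{r=0\}$ and $\{r=1\}$. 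Rigorously, the cleanest way to phrase this is a uniqueness argument: since the vector field is smooth (the payoff entries depend affinely on $r$, and $f,g,h$ are polynomials), each face is itself a union of invariant sets, and by uniqueness of solutions a trajectory initialized on a face cannot cross it. I would note that this simultaneously establishes that the open faces and their boundaries are each invariant, which the lemma asserts for all six faces.

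For the second bullet, that the whole cube $[0,1]^3$ is positively invariant, I would combine the six face-invariance results. The boundary $\partial([0,1]^3)$ is exactly the union of the six faces, each of which I have shown to be invariant; hence a trajectory on the boundary stays on the boundary. For an interior point, invariance follows because to leave the cube a trajectory would have to reach the boundary first, and at any such first contact point the flow is tangent to the corresponding face (the normal component vanishes, as computed above), so it cannot exit. Equivalently, on each face the outward normal component of the vector field is zero, which by the standard Nagumo-type subtangentiality condition guarantees positive invariance of the cube.

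The main obstacle, if any, is purely a matter of rigor rather than difficulty: one must be careful that ``the normal component vanishes on the boundary'' genuinely prevents escape, since a vanishing normal velocity at the boundary is the borderline case for invariance. This is resolved by the uniqueness of solutions to the smooth ODE system together with the factored structure, which makes each face an exact invariant manifold rather than merely a set where the flow is instantaneously tangent; I would therefore lead with the uniqueness/invariant-manifold argument as the backbone and mention the Nagumo condition only as corroboration. I would also remark that positive invariance, rather than full invariance, is all that is claimed, so no reversibility of the flow in time is needed for this lemma.
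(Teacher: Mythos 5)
Your argument is correct and matches what the paper intends: the paper omits an explicit proof, stating only that the claim follows "through the structure of equations," and the structure it has in mind is exactly the vanishing factors $x(1-x)$, $y(1-y)$, $r(1-r)$ on the faces combined with uniqueness of solutions, which is precisely your backbone. Your additional remarks on the Nagumo subtangentiality condition are sound corroboration but not a different method.
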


Because of the above invariance properties, the system dynamics on the 6 faces  are  reduced to planar cases. Therefore it is trivial to analyze them, and we omit discussing these planar dynamics hereafter  and focus on the 3-dimensional dynamics.

\section{MAIN RESULTS}

\subsection{Convergence}
The coexistence of states in both populations has always been the main subject of bi-matrix game. For this reason, we are going to analyze the convergence and non-convergence of the dynamics of (\ref{eqn:cosys1}). 

\begin{theorem}\label{theorem:convergence1}
	When the dynamic payoff matrices (3) admit weakly dominating strategies, the corresponding  co-evolutionary dynamics (6) will always converge to the boundary for the initial conditions $(x_{0},y_{0},r_{0})$.
\end{theorem}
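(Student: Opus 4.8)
The plan is to convert the hypothesis of (weakly) dominating strategies into a fixed-sign condition on the payoff-difference functions $f$ and $g$, and then exploit the resulting monotonicity of two of the three state variables. Saying that $s_{1}$ weakly dominates $s_{2}$ in the game $A(r)$ means $a(r)\ge c(r)$ and $b(r)\ge d(r)$ for every $r\in[0,1]$, with at least one inequality strict for some $r$; since $f(y,r)=(a(r)-c(r))y+(b(r)-d(r))(1-y)$, this gives $f(y,r)\ge 0$ on all of $[0,1]^{2}$, and the same reasoning applied to the entries of $B(r)$ pins down the sign of $g$. There are up to four cases according to which pure strategy dominates in each population, but they differ only in the monotonicity direction forced on $x$ and $y$, so it suffices to present the representative case $f\ge 0$, $g\ge 0$. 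By Lemma 2 boundary initial data already lie on an invariant face, so I assume throughout that $(x_{0},y_{0},r_{0})$ is interior.

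In that case the first two equations of (6) read $\dot x=x(1-x)f\ge 0$ and $\dot y=y(1-y)g\ge 0$ throughout the cube, so $x(t)$ and $y(t)$ are non-decreasing; being confined to $[0,1]$ they converge, and I denote the limits $x(t)\to x^{*}$ and $y(t)\to y^{*}$. The decisive step is then to read off the fate of $r$ from the feedback term $h$. Since $h$ is continuous and $(x(t),y(t))\to(x^{*},y^{*})$, we have $h(x(t),y(t))\to h^{*}:=h(x^{*},y^{*})$. If $h^{*}>0$, then $h(x(t),y(t))\ge\delta>0$ for all large $t$ and some $\delta$, so $\dot r\ge\delta\,r(1-r)$ eventually and the scalar logistic comparison forces $r(t)\to 1$; symmetrically $h^{*}<0$ forces $r(t)\to 0$. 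In either case the trajectory limits onto the face $\{r=0\}$ or $\{r=1\}$, i.e.\ onto the boundary of the cube, which is the claim.

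It remains to treat the single borderline case $h^{*}=0$, which I expect to be the crux. Because $x,y$ increase to their limits we have $x(t)\le x^{*}$ and $y(t)\le y^{*}$, and since $h$ is increasing in each argument this gives $h(x(t),y(t))\le h^{*}=0$; hence $\dot r\le 0$, so $r(t)$ is non-increasing and converges to some $r_{\infty}$. If $r_{\infty}=0$ the limit lies on the face $\{r=0\}$ and we are done. Otherwise $(x^{*},y^{*},r_{\infty})$ is a fixed point of (6); were it interior, the identity $f(y^{*},r_{\infty})=0$ with both coefficients of $f$ non-negative and $y^{*}\in(0,1)$ would force $a(r_{\infty})=c(r_{\infty})$ and $b(r_{\infty})=d(r_{\infty})$, i.e.\ the game $A(r_{\infty})$ has tied columns, and likewise for $B(r_{\infty})$. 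This contradicts the strict inequality built into weak dominance unless both payoff matrices happen to be exactly payoff-indifferent at the single value $r_{\infty}$. Excluding this non-generic, measure-zero configuration is the only remaining obstacle; I would dispose of it by observing that $h<0$ strictly away from the limit, which keeps $r$ strictly decreasing and so prevents it from settling at an interior level while the dominated strategy is still being selected against, driving every interior trajectory onto a face of the cube.
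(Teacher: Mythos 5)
Your first half matches the paper's argument: weak dominance of $s_{1}$ in $A(r)$ gives $f(y,r)=(a(r)-c(r))y+(b(r)-d(r))(1-y)\ge 0$, hence $x$ is monotone and converges, and likewise for $y$. The divergence---and the gap---is in what you do next. The paper uses the affineness of the entries to conclude that $a(r)-c(r)$ and $b(r)-d(r)$ are \emph{strictly} positive for $r\in(0,1)$ (they can vanish only at $r=0$ or $r=1$), deduces that $x$ converges to $1$ and $y$ to its extreme value, and only then reads off the sign of $\dot r$ on the resulting edge; since $x$ lands at a vertex value, the limit set lies in a face of the cube no matter what $r$ does, so the borderline case $h^{\ast}=0$ causes no trouble (cf.\ the $\theta_{1}=1$ case of Example 1, where $r$ genuinely stalls at an interior value but the trajectory still ends on the edge $\{x=1,y=0\}$). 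You instead allow $x^{\ast},y^{\ast}$ to be interior and try to force $r$ to $0$ or $1$, which is exactly why $h^{\ast}=0$ becomes, as you say, the crux---and your proposed resolution does not work: a strictly decreasing $r$ can perfectly well converge to an interior value, so ``keeps $r$ strictly decreasing'' proves nothing.

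The fix is available inside your own argument but you stop short of it. If the trajectory converges to a point with all three coordinates interior, that point is an equilibrium, so $f(y^{\ast},r_{\infty})=0$ with $y^{\ast}\in(0,1)$ forces $a(r_{\infty})=c(r_{\infty})$ and $b(r_{\infty})=d(r_{\infty})$; but a nonnegative affine function on $[0,1]$ that vanishes at an interior point is identically zero, so this would make the two rows of $A(r)$ identical for every $r$---payoff equivalence, not weak dominance. The configuration you label ``non-generic, measure-zero'' is therefore impossible, and no genericity assumption is needed. Two further loose ends: your reduction of the four dominance patterns to one representative case is not innocent, because the $h^{\ast}=0$ step uses that $x$ and $y$ are co-monotone with $h$ (so that $h(x(t),y(t))\le h^{\ast}$ and $r$ is monotone); in the mixed case ($x$ increasing, $y$ decreasing) $r$ need not be monotone and you lose the ``limit point is a fixed point'' step. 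And your reading of weak dominance (``at least one inequality strict for some $r$'') is weaker than the one the paper actually uses and needs, namely strict inequalities for all $r\in(0,1)$ with equality only at the endpoints; under your weaker reading the bracket can vanish on a large set and the convergence argument degenerates.
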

\begin{proof}
Without loss of generality, let the first strategy be the weakly dominating strategy for the first population. Then  the payoff matrix $A(r)$ takes the following form  
	\[ \begin{bmatrix}
	a(r)&b(r)\\
	c(r)&d(r)
	\end{bmatrix},~~a(r)\geq c(r), b(r)\geq d(r). \]
 As all entries of $A(r)$  are affine functions of $r$, the equality signs of the above inequalities hold only when $r$ is exactly at its maximum or minimum. 
Then the  replicator equation corresponding to $x$ becomes 
	\[ \dot{x}=x(1-x)[(a(r)-c(r))y+(b(r)-d(r))(1-y)]. \]
Immediately one can check that the term  in the square bracket is always positive for any initial $r_{0}\in (0,1)$. As a result,  $ x $ will converge to $ 1 $. When $r_{0}= 0$ or $r_{0}= 1$, the dynamics will be restricted in a plane because of the invariance according to Lemma 2.

Similarly one can prove convergence for  $ y $.
When the states $ x $ and $ y $ reach their maximums or minimums, the sign of $ \dot{r} $ will  depend on the parameters $\theta_{1}$ and $\theta_{2}$.  Given these two parameters are constant,  $ r $ will arrive at some fixed point asymptotically.
\end{proof}

We now use an example to illustrate the  convergence.


\begin{exmp}\label{example}
	The Hawk-Dove game with dynamic payoff matrices.
\end{exmp}

A version of the Hawk-Dove game \cite{c1} takes the following environment-dependent form, 
\[ A(r)=\begin{bmatrix}
0.5r&1\\
0&0.5
\end{bmatrix},~~~~ B(r)=\begin{bmatrix}
0.5&0\\
1&0.5r
\end{bmatrix}.\] 

Obviously the cooperation strategy corresponds to the unique Nash Equilibrium (NE) for the first population and the defection strategy  for the second population respectively, whenever $r \in (0,1)$. But when $r$ reaches its maximum or minimum, these two strategies are not NE anymore.  Therefore, they are only weakly dominating strategies.

The corresponding system now becomes
\begin{equation} \label{eqn:coevolsys2}
\begin{cases}
&\dot{x}=x(1-x)(0.5ry-0.5y+0.5)\\
&\dot{y}=y(1-y)(0.5rx-0.5x-0.5r)\\
&\dot{r}=r(1-r)[(1+\theta_{1})x+ (1+\theta_{2})y-2].
\end{cases}
\end{equation}
From the  computation of the Jacobian, one can check the stability of the trivial  corner equilibria. Obviously, five of the equilibria, $ (0,0,0) $, $ (0,0,1) $, $ (1,1,0) $, $ (0,1,1) $ and $ (1,1,1) $, are unstable because all the Jacobians have at least one positive eigenvalue. It is, however, not clear whether the other equilibria are stable or not, since their Jocobians are dependent of parameters.

However, in the first equation $ (0.5ry-0.5y+0.5)=0.5ry+0.5(1-y)\geq 0 $ and the equality holds only when $ y=1, r=0 $. Thus, $ x $ will converge to $ 1 $ if  $ x_{0} \neq 0 $. In contrast, in the second equation $ (0.5rx-0.5x-0.5r)=0.5x(r-1)-0.5r \leq 0 $ and the equality holds only when $ x=0, r=0 $.  So it always holds that $ y $ will converge to $ 0 $  as expected if $ y_{0}\neq 0 $ initially. As a consequence, there are  two more critical sets, i.e., lines $ \{y=1, r=0\} $ and $ \{x=0, r=0\} $, excluding the eight corner equilibrium points.

For $ r $, consider the open set $ (0,1)^{3} $, which contains no equilibrium point as $ (0.5ry-0.5y+0.5)> 0 $ and $ (0.5rx-0.5x-0.5r)< 0 $.
Hence,  every trajectory will go towards a point on the boundary $ \{x=1, y=0 \}$, when the starting point is inside  $ [0,1]^{3} $. Further, on the line $ \{x=1, y=0\} $, we have
\[ \dot{r}= r(1-r)[(1+\theta_{1})x+ (1+\theta_{2})y-2]=r(1-r)(\theta_{1}-1). \]
Thus, the evolution of  $ r $ depends on the parameter $ \theta_{1} $. To sum up, the trajectory will asymptotically  converge to the fixed point $ x=1, y=0, r=1 $ if $ \theta_{1}>1 $ or to the fixed point $ x=1, y=0, r=0 $ if $ \theta_{1}<1 $. For the case $ \theta_{1}=1 $, the trajectory will stay there after it reaches the line $ \{x=1, y=0\} $.
The different situations are showed in Fig. 1.

In this example the states of both of the two populations cannot co-exist no matter what values  the parameters $ \theta_{1}, \theta_{2} $ take. In other words, the environment factor does not make a big difference to the outcome of game dynamics even when the payoff matrices admit only weakly dominating strategies. 

\begin{figure}[htbp!]
	\centering 
	\subfloat[$ \theta_{1}=0.5 $]{\includegraphics[width=.45\columnwidth]{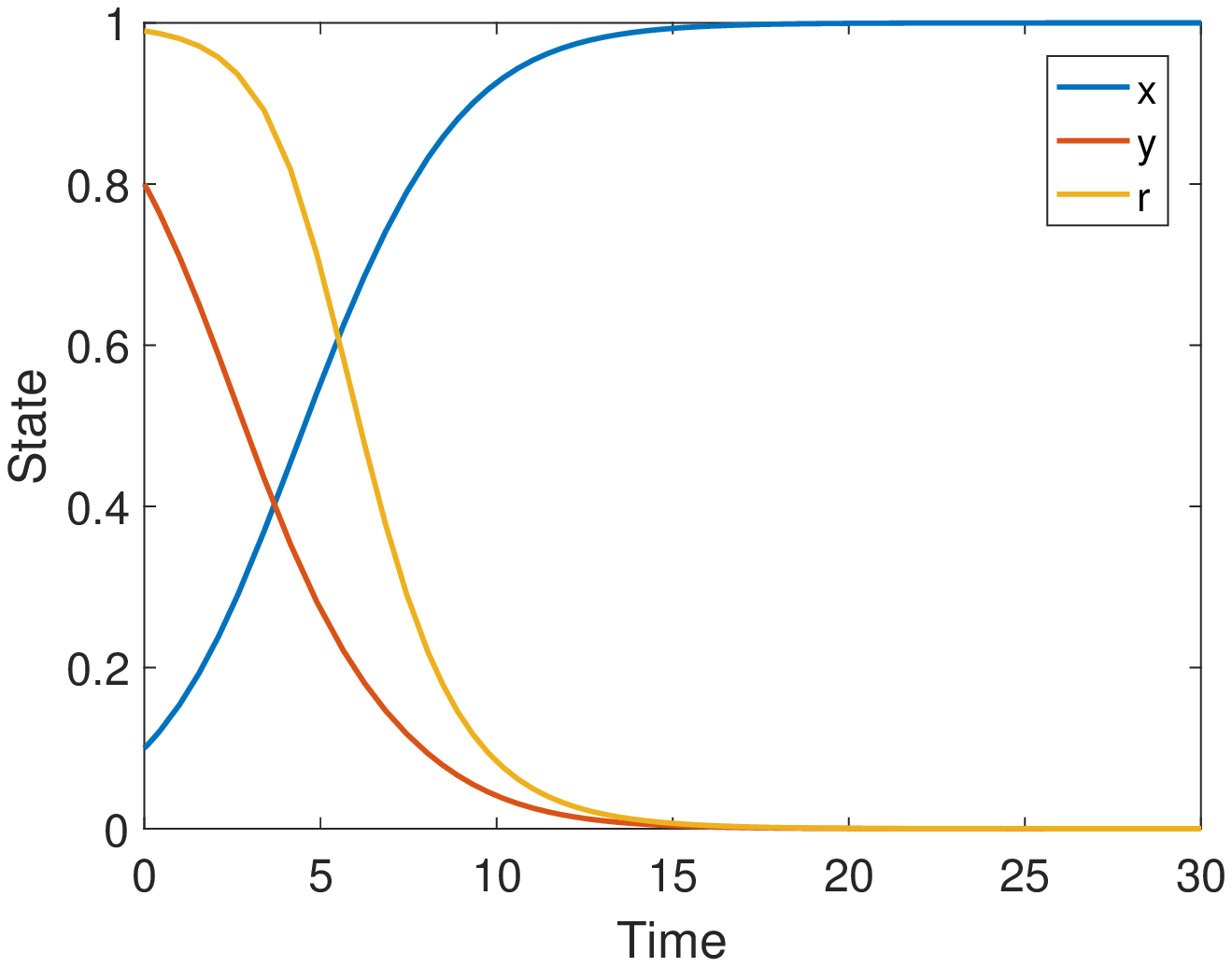}\quad\includegraphics[width=.45\columnwidth]{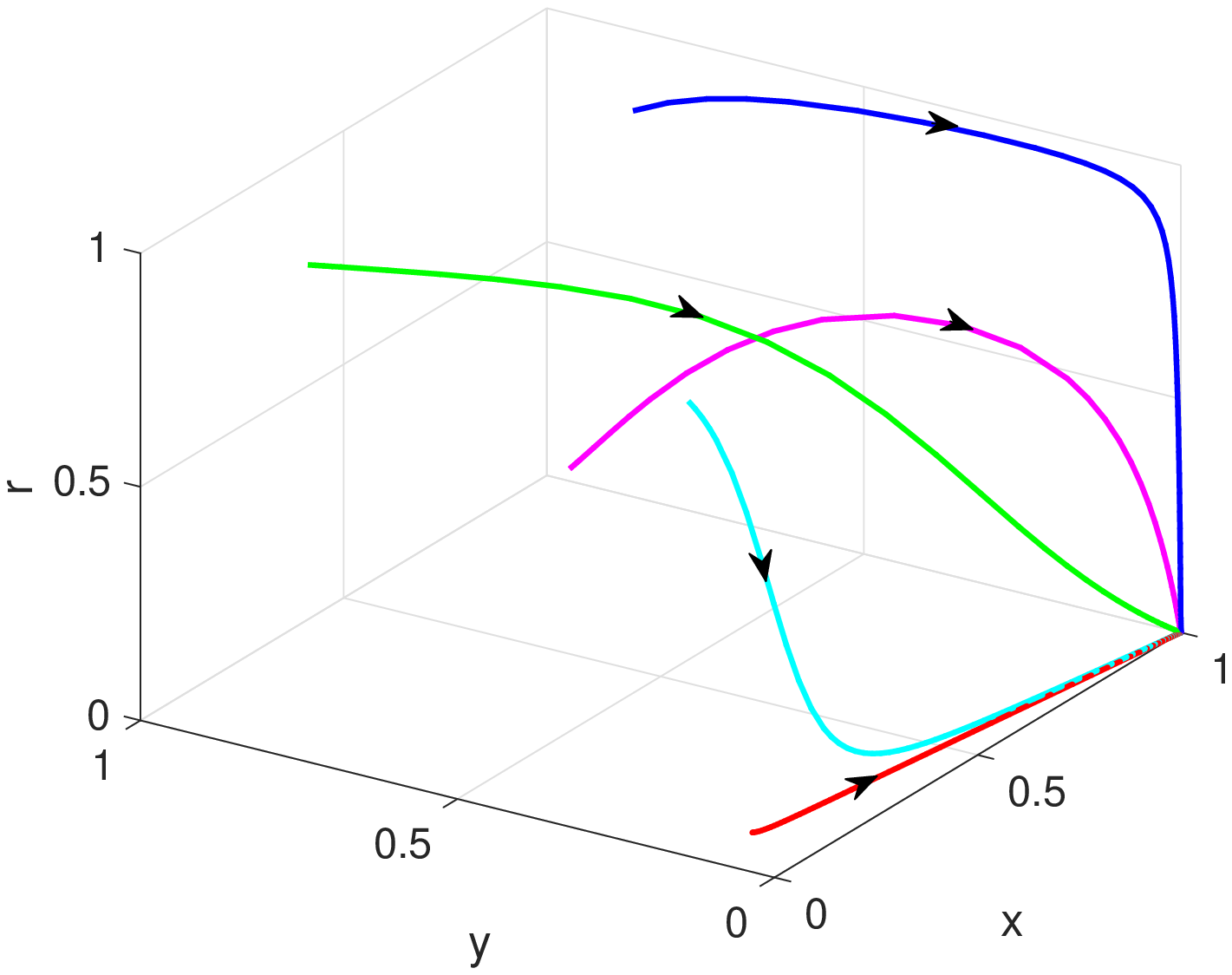}} 
	\\
	\subfloat[$ \theta_{1}=1 $]{\includegraphics[width=.45\columnwidth]{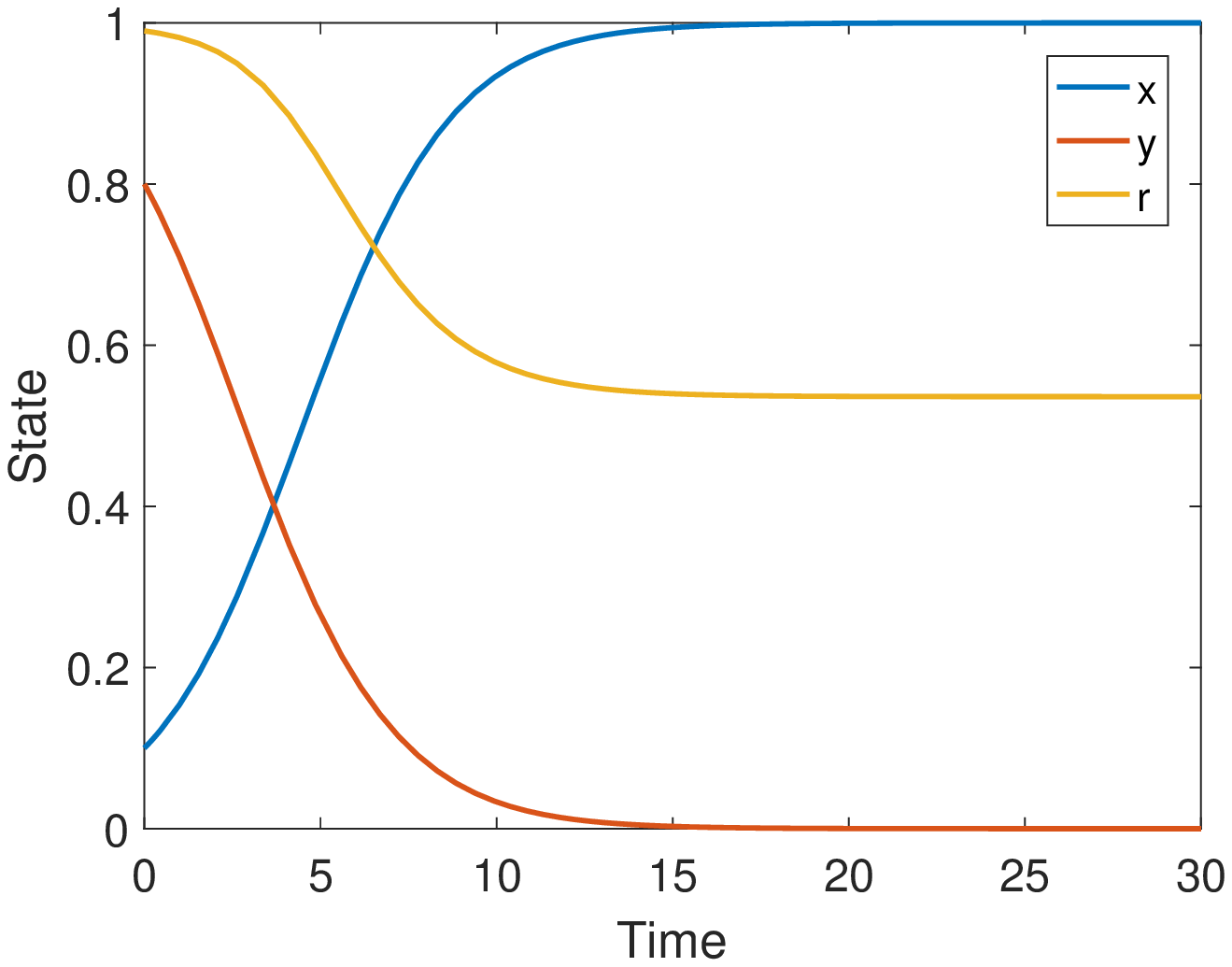}\quad\includegraphics[width=.45\columnwidth]{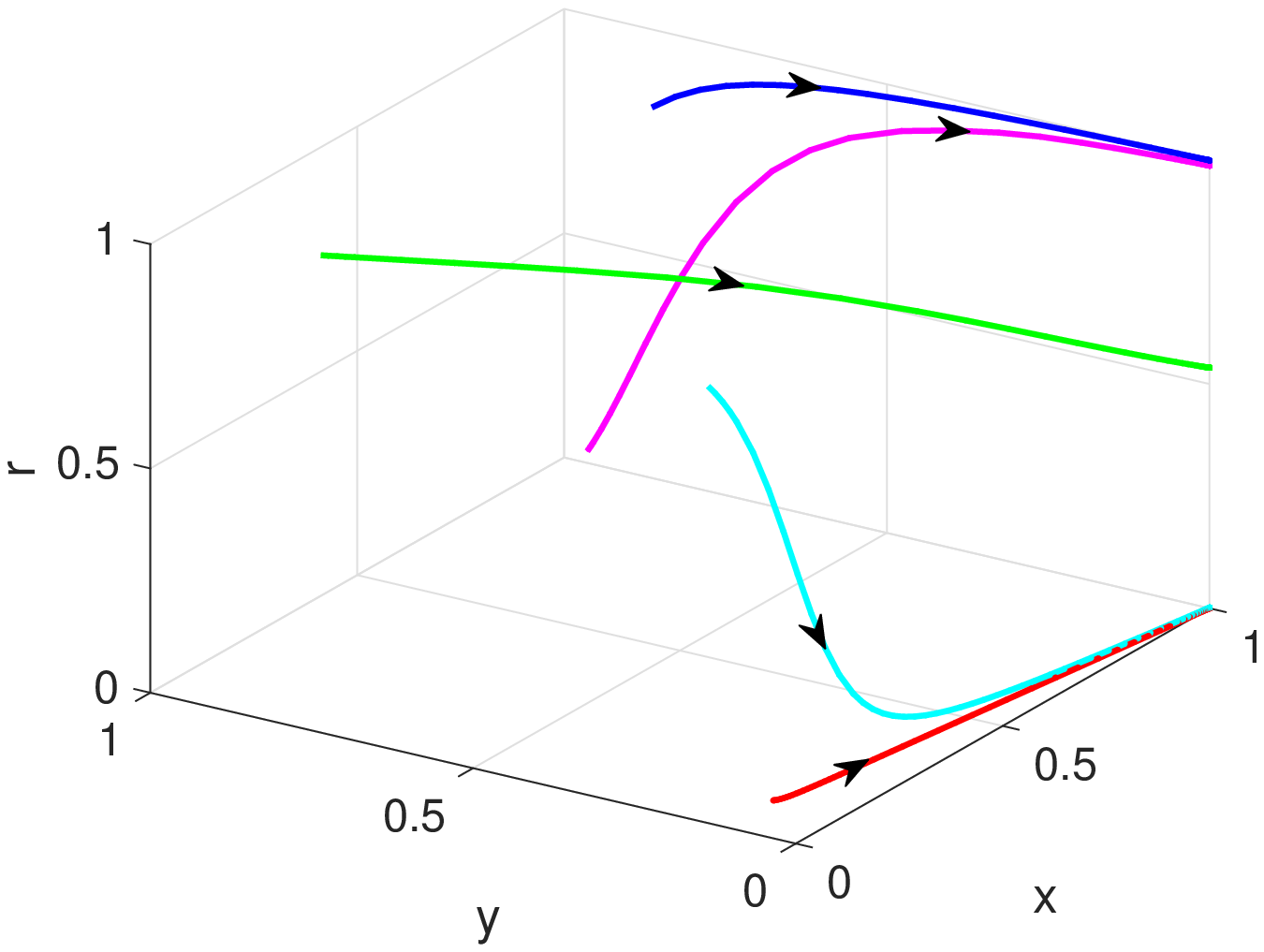}} \\
	\subfloat[$ \theta_{1}=1.5 $]{\includegraphics[width=.45\columnwidth]{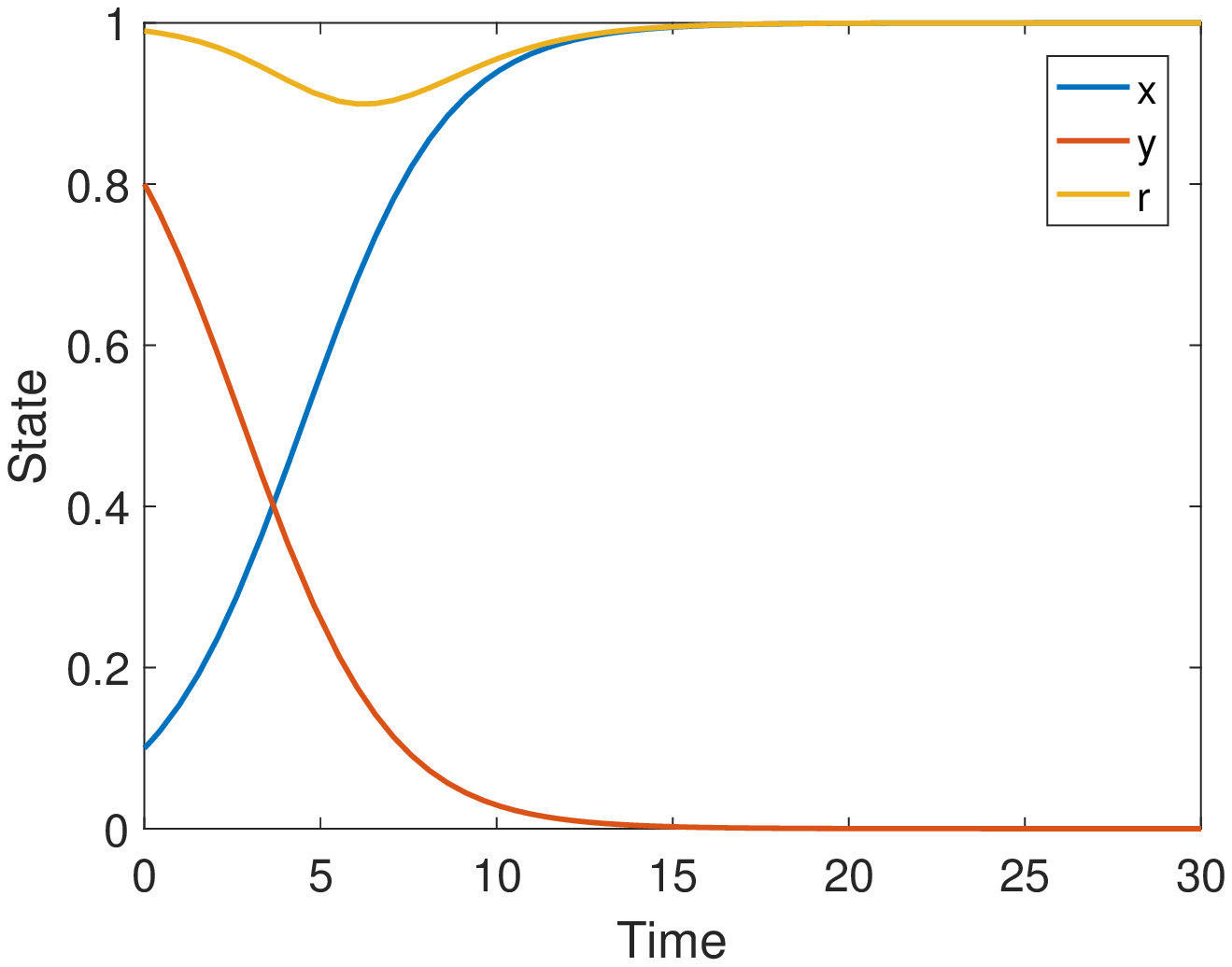} \quad\includegraphics[width=.45\columnwidth]{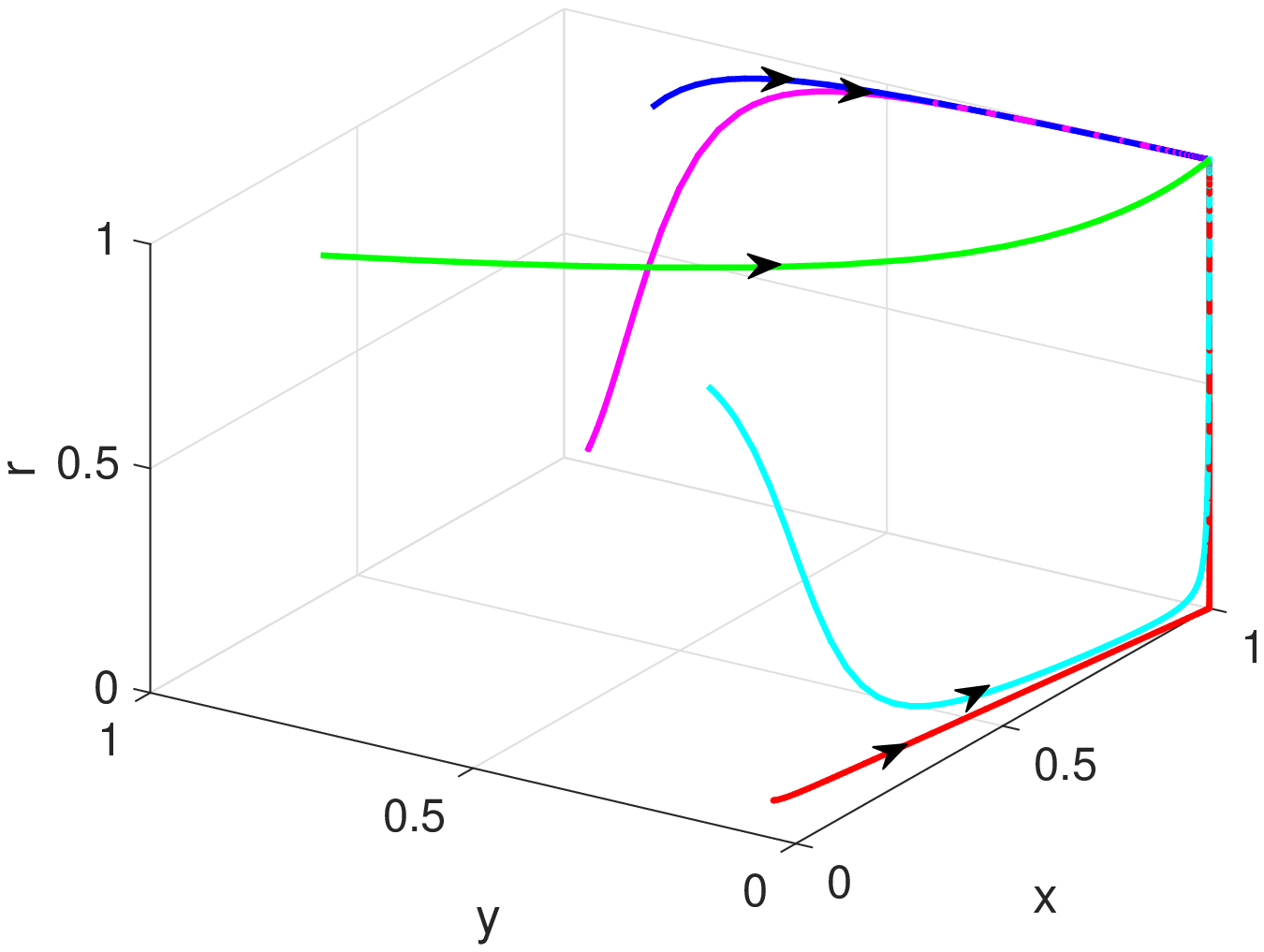}
	}\\
	\caption[Simulation 1]{Convergence of the trajectories when $\theta_{2}=0.5$ and $\theta_{1}=0.5$ (a), $\theta_{1}=1$ (b), $\theta_{1}=1.5 $ (c).  The left panels show the time evolution of  $(x,y,r)$ corresponding to the same initial condition $(0.1, 0.8, 0.99)$. In the right panels the distinct trajectories correspond to initial conditions $(0.1, 0.1, 0.01)$ (red), $(0.1, 0.8, 0.99)$ (green), $(0.1, 0.2, 0.9)$ (cyan), $(0.9, 0.9, 0.1)$ (magenta) and $(0.9, 0.9, 0.1)$ (blue).} 
	\label{fig:example1}
\end{figure}

Theorem \ref{theorem:convergence1} only gives a sufficient condition for convergence to the boundary point. 
It would be very difficult to construct the general conditions for non-convergence of the co-evolutionary system. In the following section, we will provide a specific type of payoff structure which results in oscillating behaviors. 

\subsection{Periodic Orbits}
We adopt the following payoff matrices which have been used in \cite{c19}. They are modified from the general Prisoner Dilemma (PD) game:
\begin{eqnarray}
A(r)=(1-r)\begin{bmatrix}
T_{1}&P_{1}\\
R_{1}&S_{1}
\end{bmatrix}+r\begin{bmatrix}
R_{1}&S_{1}\\
T_{1}&P_{1}
\end{bmatrix},\\
B(r)=(1-r)\begin{bmatrix}
T_{2}&P_{2}\\
R_{2}&S_{2}
\end{bmatrix}+r\begin{bmatrix}
R_{2}&S_{2}\\
T_{2}&P_{2}
\end{bmatrix},
\end{eqnarray}
where $ P_{1}>S_{1} $, $ T_{1}>R_{1}$ and  $ P_{2}>S_{2} $, $ T_{2}>R_{2} $ such that mutual cooperation is a Nash equilibrium when $ r \rightarrow
0 $ and mutual defection is a Nash equilibrium when $ r\rightarrow1 $. Intuitively, more players prefer to cooperate in the period of scarce resources and  incline to defect in the period of ample resources.
These environment-dependent payoff matrices can be rewritten as:
\begin{eqnarray*}
A(r)=\begin{bmatrix}
T_{1}-(T_{1}-R_{1})r&P_{1}-(P_{1}-S_{1})r\\
R_{1}+(T_{1}-R_{1})r&S_{1}+(P_{1}-S_{1})r
\end{bmatrix},\\
B(r)=\begin{bmatrix}
T_{2}-(T_{2}-R_{2})r&P_{2}-(P_{2}-S_{2})r\\
R_{2}+(T_{2}-R_{2})r&S_{2}+(P_{2}-S_{2})r
\end{bmatrix}.
\end{eqnarray*}

Using  (\ref{eqn:cosys1}), we arrive at 
\begin{equation}\label{eqn:cosys3}
\begin{cases}
&\dot{x}=x(1-x)[\delta_{PS_{1}}+(\delta_{TR_{1}}-\delta_{PS_{1}})y](1-2n)\\
&\dot{y}=y(1-y)[\delta_{PS_{2}}+(\delta_{TR_{2}}-\delta_{PS_{2}})x](1-2n)\\
&\dot{r}=r(1-r)[(1+\theta_{1})x+(1+\theta_{2})y-2],
\end{cases}
\end{equation}
where $ \delta_{PS_{1}}=P_{1}-S_{1}>0 $, $ \delta_{TR_{1}}=T_{1}-R_{1}>0 $ and $ \delta_{PS_{2}}=P_{2}-S_{2}>0 $, $ \delta_{TR_{2}}=T_{2}-R_{2}>0 $.

\subsubsection{Equilibria and stability}
Except for the eight isolated fixed points identified  in Lemma $1$ (i.e.,  eight corners of the domain), there is also a set of  fixed points in the interior of the cube, i.e., $ \left\lbrace (x,y,r):(1+\theta_{1})x+(1+\theta_{2})y=2, r=\dfrac{1}{2}\right\rbrace $. To study the local stability at these fixed points, one needs to calculate their Jacobians. It is easy to check that all the eight corner fixed points are unstable because each of their Jacobian matrices has at least one positive eigenvalue.

In addition, the Jacobian matrix of an interior fixed point $(x^{\ast},y^{\ast},r^{\ast})$ is of the following form:
\begin{equation}\label{eqn:jacobian}
J^{\ast}=\begin{bmatrix}
0&0&-2x^{\ast}(1-x^{\ast})[\delta_{PS_{1}}+(\delta_{TR_{1}}-\delta_{PS_{1}})y^{\ast}]\\
0&0&-2y^{\ast}(1-y^{\ast})[\delta_{PS_{2}}+(\delta_{TR_{2}}-\delta_{PS_{2}})x^{\ast}]\\
\dfrac{1+\theta_{1}}{4}&\dfrac{1+\theta_{2}}{4}&0
\end{bmatrix},
\end{equation} 
where $ (1+\theta_{1})x^{\ast}+(1+\theta_{2})y^{\ast}=2$.\\
In general the characteristic polynomial for a three dimensional  system takes the form
\begin{equation}\label{eqn:lambda}
\lambda^{3}-T\lambda^{2}-K\lambda-D=0,
\end{equation}
where $ T, D $ indicate the trace and determinant of the Jacobian respectively.
We calculate for the Jacobian (\ref{eqn:jacobian})
\[ T=\text{trace}(J^{\ast})=0, ~~~~D=\det(J^{\ast})=0.\]
and
\[
\begin{aligned}
K&=-\dfrac{1+\theta_{1}}{2}y^{\ast}(1-y^{\ast})[\delta_{PS_{2}}+(\delta_{TR_{2}}-\delta_{PS_{2}})x^{\ast}]\\
&-\dfrac{1+\theta_{2}}{2}x^{\ast}(1-x^{\ast})[\delta_{PS_{1}}+(\delta_{TR_{1}}-\delta_{PS_{1}})y^{\ast}]\\
&<0.
\end{aligned} \]
Thus, one can conclude the eigenvalues are $ \lambda_{1}=0 $ and $ \lambda_{2,3}=\pm \sqrt{-K}i $, for which $ \lambda_{2,3} $ are conjugate pure imaginary numbers.
When the real parts of all the eigenvalues are zero, one can only say that the interior equilibria  are neutrally stable for the linearized system. As a consequence, to analyze the stability of the original nonlinear system through  linearization  is not effective. Moreover, the  method in \cite{c19} utilizing the Hamiltonian system theory is not applicable anymore because  the system's dimension is odd.

Through simulation in MATLAB, we  find the trajectories initialized with $ (x_{0},y_{0},r_{0}) $ in the interior of the cube $ [0,1]^{3} $ and not at the fixed points will exhibit closed  periodic orbits as shown in Fig. 2. Therefore, it is natural to ask whether  all the solutions for system (\ref{eqn:cosys3}) in $(0,1)^{3}$ are either fixed points or periodic orbits.

\begin{figure}[htbp!]
	\centering 
	\subfloat[Time Evolution]{\includegraphics[width=.8\columnwidth]{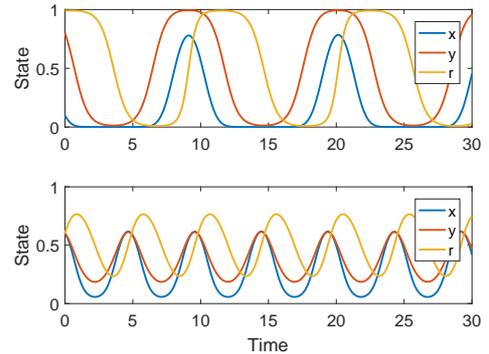}}\quad\subfloat[Periodic Orbits]{\includegraphics[width=.8\columnwidth]{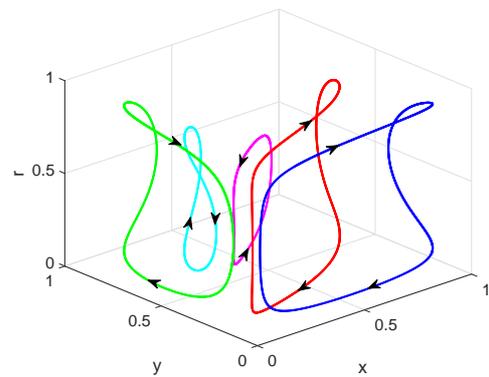}} 
	\caption[Simulation 2]{Simulation of Periodic Orbits with the parameters $R_{1}=2$, $S_{1}= -2$, $T_{1}=4$ and $P_{1}= 3$;  $R_{2}=3$, $S_{2}= 0$, $T_{2}=5$ and $P_{2}= 1$; $\theta_{1}=2$ and $\theta_{2}=2$. (a) Time evolution of states $(x,y,r)$ with initial conditions $(0.1, 0.8, 0.99)$ and $(0.6, 0.6, 0.6)$. (b) Periodic trajectories with initial conditions $(0.1, 0.1, 0.1)$ (red), $(0.1, 0.8, 0.9)$ (green), $(0.6, 0.6, 0.6)$ (magenta), $(0.2, 0.6, 0.6)$ (cyan) and $(0.9, 0.1, 0.1)$ (blue). 
    } 
	\label{fig:example2}
\end{figure}

In the following subsection, we are going to show that the neutral stability of linearized system at fixed points is preserved  in the original nonlinear system. We will first analyze the trajectories around the equilibrium points, and then apply an inverse derivation to prove this stability.

\subsubsection{Reversible system}

Before proving the claim on periodic orbits, we need to introduce the \emph{reversible system theory} \cite{c20,c21}. We say a transformation, denoted by $G$, is an \emph{involution} if the the composition of itself is the identify mapping, i.e. $ G\circ G = Identity$. 

\begin{definition}
	A dynamical system is said to be \emph{reversible}  if there is an involution in phase space which reverses the direction of time.
\end{definition}
Thus, a general system of coupled ordinary differential equations, 
\begin{equation} \label{eqn:reversible}
\dfrac{d\mathbf{x}}{dt}=F(\mathbf{x}),~~~~\mathbf{x}\in \mathbb{R}^{N},
\end{equation}
is reversible if there is an involution $ G $ which reverses the direction of time, i.e., 
\begin{equation}
\dfrac{dG(\mathbf{x})}{dt}=-F(G(\mathbf{x}))
\end{equation}
and hence 
\begin{equation}
dG|_{\mathbf{x}}\cdot F(\mathbf{x})=-F\circ G.
\end{equation}
The above definition implies that under the transformation of the $ N $-dimensional phase space by $ G $, the system (\ref{eqn:reversible}) is transformed to that obtained by just putting $ t\rightarrow -t $, so that under the combined action of involution and time reversal the equations are invariant. An involution that achieves this is called the \emph{time-reversal symmetry} of the system.

By the transformation of $ t\rightarrow -t $, $ x\rightarrow x $, $ y\rightarrow y $ and $ r\rightarrow 1-r $, one can easily verify the resulting system is the same as the original one.  So the  system  (\ref{eqn:cosys3}) is invariant under $ t\rightarrow -t $, with the phase space involution $ G:  x\rightarrow x ,  y\rightarrow y,   r\rightarrow 1-r $. Hence, this system is reversible with respect to the above $ G $ and the fixed manifold $ Fix(G) $ is the plane $ \{r=\frac{1}{2}\} $. Now we review  some intrinsic properties of the reversible system.

\begin{definition}[Symmetric orbits \cite{c22}] 
Let $ o(\mathbf{x}) $ be an orbit of a dynamical system, i.e. $ o(\mathbf{x})={ \varphi_{t}(\mathbf{x})|t\in \mathbb{R} } $. Then $ o(\mathbf{x}) $ is reversibly symmetric with respect to $ G $ when the orbit is set-wise invariant under $ G $, i.e. $ G(o(\mathbf{x})) =o(\mathbf{x})$.
\end{definition}
\begin{lem}[Periodic orbits for reversible systems \cite{c22}]
Let $ o(\mathbf{x}) $ be an orbit of the flow of an autonomous vector filed with time-reversal symmetry $ G $. Then,
\begin{enumerate}
\item  $ o(\mathbf{x}) $ is symmetric with respect to $ G $ if and only if $ o(\mathbf{x}) $ intersects $ \text{Fix}(G) $, in which case the orbit intersects $ \text{Fix}(G) $ in no more than two points and is fully contained in $ \text{Fix}(G^{2}) $.
\item  An orbit $ o(\mathbf{x}) $ intersects $ \text{Fix}(G) $ at precisely two points if and only if the orbit is periodic (and not a fixed point) and symmetric with respect to $ G $.
\end{enumerate}\label{theorem:periodicorbits}
\end{lem}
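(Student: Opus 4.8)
The plan is to reduce the whole lemma to a single structural identity relating the involution $G$ to the flow $\varphi_t$ of $F$, after which everything can be argued at the level of the time-parametrization of one orbit. First I would establish the conjugation relation $G\circ\varphi_t=\varphi_{-t}\circ G$ for all $t$. This is a consequence of the time-reversal symmetry $dG|_{\mathbf{x}}\cdot F(\mathbf{x})=-F(G(\mathbf{x}))$: differentiating the curve $\gamma(t)=G(\varphi_t(\mathbf{x}))$ gives $\dot\gamma(t)=dG|_{\varphi_t(\mathbf{x})}\cdot F(\varphi_t(\mathbf{x}))=-F(\gamma(t))$, so $\gamma$ solves the time-reversed equation with $\gamma(0)=G(\mathbf{x})$; by uniqueness of solutions $\gamma(t)=\varphi_{-t}(G(\mathbf{x}))$, which is exactly the claimed identity. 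Every subsequent step is a consequence of this relation together with the flow property $\varphi_{s+t}=\varphi_s\circ\varphi_t$.

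For the first statement I would prove both implications through a single ``midpoint'' construction. If $o(\mathbf{x})$ is symmetric then $G(\mathbf{x})$ lies on the orbit, say $G(\mathbf{x})=\varphi_{2c}(\mathbf{x})$; setting $\mathbf{q}=\varphi_c(\mathbf{x})$ and invoking the conjugation relation gives $G(\mathbf{q})=\varphi_{-c}(G(\mathbf{x}))=\varphi_{-c}(\varphi_{2c}(\mathbf{x}))=\mathbf{q}$, so $\mathbf{q}\in o(\mathbf{x})\cap\text{Fix}(G)$. Conversely, if the orbit meets $\text{Fix}(G)$ I would reparametrize so that $\mathbf{x}\in\text{Fix}(G)$; then $G(\varphi_t(\mathbf{x}))=\varphi_{-t}(\mathbf{x})$ for every $t$, so $G$ carries the orbit onto itself and $o(\mathbf{x})$ is symmetric. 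For the ``no more than two points'' clause I would again place the base point on $\text{Fix}(G)$ and suppose $\varphi_\tau(\mathbf{x})\in\text{Fix}(G)$ with $\tau\neq0$; the same identity yields $\varphi_\tau(\mathbf{x})=\varphi_{-\tau}(\mathbf{x})$, hence $\varphi_{2\tau}(\mathbf{x})=\mathbf{x}$, forcing periodicity with a period dividing $2\tau$. Writing $P$ for the minimal period, the intersection times satisfy $2t\equiv0\ (\mathrm{mod}\ P)$, i.e. $t\equiv0\ (\mathrm{mod}\ P/2)$, so within one period only $t=0$ and $t=P/2$ qualify, giving at most two distinct points; a non-periodic orbit therefore meets $\text{Fix}(G)$ at most once. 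Since $G$ is an involution we have $G^2=\mathrm{Id}$ and $\text{Fix}(G^{2})$ is the entire phase space, so the final containment is automatic (this clause becomes substantive only for reversing symmetries of higher order).

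For the second statement I would assemble the pieces. If the orbit meets $\text{Fix}(G)$ in exactly two points then two distinct intersection times $0\neq\tau$ force $\varphi_{2\tau}(\mathbf{x})=\mathbf{x}$ by the computation above, so the orbit is periodic; it is not a fixed point precisely because the two points are distinct, and it is symmetric by the first statement. Conversely, if the orbit is periodic, not a fixed point, and symmetric, then it meets $\text{Fix}(G)$ by the first statement; reparametrizing so that $\mathbf{x}\in\text{Fix}(G)$ and letting $P>0$ be the minimal period, the identity gives $G(\varphi_{P/2}(\mathbf{x}))=\varphi_{-P/2}(\mathbf{x})=\varphi_{P/2}(\mathbf{x})$, so $\varphi_{P/2}(\mathbf{x})\in\text{Fix}(G)$, and $\varphi_{P/2}(\mathbf{x})\neq\mathbf{x}$ since otherwise $P/2$ would be a smaller period. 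Together with the ``at most two'' bound this yields exactly two intersection points.

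The main obstacle I expect is not any individual implication but keeping the bookkeeping of time-parametrizations and the periodic-versus-aperiodic dichotomy exactly right: one must count intersection \emph{points} set-wise rather than intersection \emph{times}, and must check that the reparametrizations placing a base point on $\text{Fix}(G)$ are legitimate uses of the flow's group structure. The only genuinely analytic ingredient is the uniqueness of solutions used to upgrade the infinitesimal symmetry condition to the global conjugation relation $G\circ\varphi_t=\varphi_{-t}\circ G$; once that is secured, the remainder is elementary algebra of the flow.
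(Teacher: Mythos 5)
Your proposal is correct, but there is nothing in the paper to compare it against: the paper states this lemma as an imported result, quoted from the Lamb--Roberts survey \cite{c22}, and gives no proof of its own (the only remark following the statement is that the lemma ``is widely used for studying reversible dynamical systems''). Your argument is essentially the standard one from the reversible-systems literature: everything is reduced to the conjugation identity $G\circ\varphi_{t}=\varphi_{-t}\circ G$, obtained from the infinitesimal symmetry condition by uniqueness of solutions, after which the midpoint construction gives the fixed point on a symmetric orbit, the identity $\varphi_{\tau}(\mathbf{x})=\varphi_{-\tau}(\mathbf{x})$ at a second intersection time forces periodicity with period dividing $2\tau$, and the counting of intersection times modulo the minimal period yields the two-point bound. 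Your handling of the fine points is also right: the reparametrization placing the base point on $\mathrm{Fix}(G)$ is legitimate because the claims are set-wise statements about the orbit; the $\mathrm{Fix}(G^{2})$ clause is vacuous for an involution (it is substantive only for reversing symmetries of higher order, which is why it appears in the general statement in \cite{c22}); and the exclusion of equilibria in part 2 is needed exactly where you invoke the minimality of the period to get $\varphi_{P/2}(\mathbf{x})\neq\mathbf{x}$. The one standing hypothesis you use silently is completeness of the flow, which holds in the paper's application since the dynamics are confined to the compact cube $[0,1]^{3}$.
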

This lemma is widely used for studying reversible dynamical systems.

We continue to analyze  system (\ref{eqn:cosys3}) with the mentioned time-reversal symmetry $ G $. First, we show that there is one more invariant set for  system (\ref{eqn:cosys3}).

\begin{lem}[Invariant set] For  system (\ref{eqn:cosys3}), the open region, $ \Omega= (0,1)^{3} $, is an (positively and negatively) invariant set.
\end{lem}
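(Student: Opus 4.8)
The plan is to deduce invariance of the open region directly from uniqueness of solutions together with the invariance of the six bounding faces, which is essentially already established in Lemma 2. First I would note that the right-hand side of (\ref{eqn:cosys3}) is polynomial, hence locally Lipschitz on the cube, so there is a unique solution through every point of $[0,1]^3$.

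Next I would record the structural fact that each face is invariant in \emph{both} time directions. On $\{x=0\}$ and on $\{x=1\}$ the factor $x(1-x)$ vanishes, so the first equation of (\ref{eqn:cosys3}) gives $\dot x=0$ there; the analogous computation holds on $\{y=0\},\{y=1\}$ and on $\{r=0\},\{r=1\}$. Hence the component of the vector field normal to each face vanishes identically, so the flow neither enters nor leaves any face and each face is an invariant manifold in forward and backward time.

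The main step is a contradiction argument ruling out escape of an interior trajectory to the boundary. Suppose a solution $\varphi_t(\mathbf{x}_0)$ with $\mathbf{x}_0\in(0,1)^3$ first meets $\partial[0,1]^3$ at some finite time $t^\ast$, so that $\varphi_{t^\ast}(\mathbf{x}_0)$ lies on one of the invariant faces, say $\mathcal F$. By the invariance just established, the entire orbit through $\varphi_{t^\ast}(\mathbf{x}_0)$---which is precisely our orbit---must be contained in $\mathcal F\subset\partial[0,1]^3$; yet $\varphi_t(\mathbf{x}_0)\in(0,1)^3$ for every $t<t^\ast$, a contradiction. Since $t^\ast$ may be taken positive or negative, no interior trajectory can reach the boundary in either time direction, which is exactly the claimed positive and negative invariance of $\Omega=(0,1)^3$; as the orbit then remains in the compact cube on which the field is bounded, it exists for all $t\in\mathbb R$.

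The subtle point I would be most careful about is distinguishing genuine finite-time arrival at the boundary from mere asymptotic approach as $t\to\pm\infty$: the argument above excludes only the former, but asymptotic approach is consistent with invariance of an \emph{open} set, so nothing further is required. If an explicit quantitative bound is preferred, I would instead apply the logit substitution $w=\ln\!\big(x/(1-x)\big)$, under which $\dot w$ equals the coefficient $[\delta_{PS_1}+(\delta_{TR_1}-\delta_{PS_1})y](1-2r)$; since $y,r\in[0,1]$ this quantity is bounded, so $|w(t)-w(0)|$ grows at most linearly in $t$ and $x$ stays in a compact subset of $(0,1)$ over every finite interval, and symmetrically for $y$ and $r$. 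This makes the absence of finite-time escape explicit and simultaneously reconfirms global existence.
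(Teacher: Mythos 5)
Your proof is correct for the lemma as literally stated, but it takes a genuinely different route from the paper's. You use the textbook argument: the six faces are invariant in both time directions (the normal component $x(1-x)$, $y(1-y)$ or $r(1-r)$ of the field vanishes there), so by uniqueness of solutions an interior orbit can never touch a face in finite time, and finite-time contact is all that invariance of an \emph{open} set forbids; your logit substitution $w=\ln\bigl(x/(1-x)\bigr)$ makes the absence of finite-time escape quantitative, since $\dot w$ is bounded on the cube. The paper instead argues dynamically: it identifies the attracting point of each face (e.g.\ $(0,0,1)$ in $\{r=1\}$), observes that $\dot r<0$ in a deleted interior neighborhood of that point, and concludes that an interior trajectory cannot even \emph{approach} that face; negative invariance is then deduced from positive invariance of the faces. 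Your version is more elementary, fully rigorous, and requires no case analysis, whereas the paper's (somewhat informally written) argument is aimed at a strictly stronger conclusion --- that interior trajectories do not converge to the boundary at all --- and it is this stronger property that is actually invoked in the proof of Theorem 2 (``one can exclude the trajectory going towards the face''). You correctly flag that asymptotic approach is compatible with invariance of an open set, so nothing more is needed for the lemma itself; but be aware that if the lemma is meant to carry the weight it bears in Theorem 2, your proof would need to be supplemented by an argument ruling out boundary limit points, for instance the paper's neighborhood analysis or a suitable Lyapunov-type estimate near the faces.
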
\label{lem:invariantset2}

\begin{proof}
First it can be easily checked in the plane $ \{r=1\}$ the point $ (0,0,1) $ is asymptotically stable for (\ref{eqn:cosys3}). 
Then, denote one trajectory starting from an arbitrary point $ p $ in the domain $\Omega= (0,1)^{3} $ by $ \varphi(t,0,p) $.  If it approaches the face $ \{r=1\}$ at some time $ t_{1} $, then it will always converges to the point $ (0,0,1) $ as $ t\rightarrow \infty $ because of the continuity of $\dot{x}$ and $\dot{y}$.  
Consider the deleted neighborhood $M$ of $ (0,0,1)$ in $\Omega$  , i.e., $M\subset \Omega$. One can check that $\dot{r}$ is always negative in $M$. Thus, the trajectory cannot approach the point $ (0,0,1) $.
One can also analogously prove that the trajectory $ \varphi(t,0,p) $  cannot reach other faces either. Hence, the positive invariance of $\Omega$ is proved.

If the trajectory starts from the region $ \mathbb{R}^{3}  \backslash \Omega $, it may reach the boundary of $[0,1]^{3}$.  But it cannot get into the interior, because the 6 faces of the cube are positively invariant. This corresponds to the negative invariance of  $  \Omega $. 

In conclusion, we have proved that the open set $ \Omega=(0,1)^{3} $ is positively and negatively invariant.
\end{proof}
Now we are ready to prove the main result.
\begin{theorem}
	The interior of the phase space, $ \Omega=(0,1)^{3} $, is filled with infinitely many independent periodic orbits, each centered at an  interior equilibrium point. 
\end{theorem}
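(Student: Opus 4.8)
The plan is to exploit the time--reversal symmetry $G:(x,y,r)\mapsto(x,y,1-r)$ already established for (\ref{eqn:cosys3}) and to reduce the whole statement to a single geometric fact: \emph{every orbit through an interior non-equilibrium point meets the fixed plane $\mathrm{Fix}(G)=\{r=\tfrac12\}$ in exactly two points.} Once this is shown, part~2 of Lemma~\ref{theorem:periodicorbits} immediately gives that such an orbit is a non-trivial periodic orbit, symmetric under $G$, while the interior equilibria are precisely the orbits that never leave $\mathrm{Fix}(G)$. A preliminary observation drives everything: writing $\phi(y)=\delta_{PS_{1}}+(\delta_{TR_{1}}-\delta_{PS_{1}})y$ and $\psi(x)=\delta_{PS_{2}}+(\delta_{TR_{2}}-\delta_{PS_{2}})x$, both are strictly positive on $[0,1]$, so on $\Omega$ the signs of $\dot x$ and $\dot y$ coincide with that of $(1-2r)$, while the sign of $\dot r$ is that of $h(x,y)=(1+\theta_{1})x+(1+\theta_{2})y-2$. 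Hence $x,y$ both increase when $r<\tfrac12$ and both decrease when $r>\tfrac12$, and $r$ rises above and falls below the oblique plane $h=0$: the flow genuinely winds around the equilibrium segment $L=\{h=0,\ r=\tfrac12\}$.

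Second, I would produce a first integral that confines the $(x,y)$--motion. Because the factor $(1-2r)$ is common to $\dot x$ and $\dot y$, it cancels in $\dot x/\dot y$, and a short computation shows that
\[
I_{1}(x,y)=\delta_{PS_{2}}\ln x-\delta_{TR_{2}}\ln(1-x)-\delta_{PS_{1}}\ln y+\delta_{TR_{1}}\ln(1-y)
\]
is constant along every trajectory. Each level set $I_{1}=c$ projects to a strictly monotone curve $\gamma_{c}$ joining $(0,0)$ to $(1,1)$ in the unit square, and these curves foliate $\Omega$ into invariant cylinders $\gamma_{c}\times(0,1)$. On a fixed cylinder the dynamics is effectively two--dimensional, with the $(x,y)$--point sliding forward along $\gamma_{c}$ while $r<\tfrac12$ and backward while $r>\tfrac12$.

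Third comes the crossing argument, which is where the real work lies. I would show that no orbit can remain in $\{r<\tfrac12\}$ for all forward time: there $x,y$ increase monotonically along $\gamma_{c}$ toward the endpoint $(1,1)$, so the orbit would approach the boundary of the cube, contradicting the invariance of $\Omega$; moreover near $(1,1)$ one has $h\to\theta_{1}+\theta_{2}>0$, which forces $\dot r$ bounded below by a positive constant and hence $r$ past $\tfrac12$. The symmetric statement (using $h(0,0)=-2<0$) rules out entrapment in $\{r>\tfrac12\}$. Consequently every non-equilibrium orbit crosses $\mathrm{Fix}(G)$; by part~1 of Lemma~\ref{theorem:periodicorbits} it is then $G$--symmetric and meets the plane in at most two points, and re-running the non-entrapment argument from the first crossing (where $h\neq0$, since $h=0$ there would mean an equilibrium) supplies a second. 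The cleanest way to certify that these are \emph{exactly} two crossings and that the orbit actually closes up is the energy identity $\tfrac{d}{dt}\ln[r(1-r)]=(1-2r)\,h(x,y)$: combined with $I_{1}$ it makes $r(1-r)$ a prescribed function $\rho$ of position along $\gamma_{c}$, and since $\int h$ tends to $+\infty$ at both ends of $\gamma_{c}$ while $r(1-r)\le\tfrac14$ with equality only at $r=\tfrac12$, the level set is a single compact loop, bounded away from the boundary, that meets $\{r=\tfrac12\}$ exactly at its two turning points. Part~2 of Lemma~\ref{theorem:periodicorbits} then yields periodicity.

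Finally, each cylinder $\gamma_{c}$ meets the equilibrium segment $L$ in exactly one point, about which the corresponding symmetric loop is nested, so every periodic orbit is centered at an interior equilibrium; letting $c$ and the loop amplitude vary continuously produces the claimed continuum of independent periodic orbits filling $\Omega$. The main obstacle is the third step --- proving non-entrapment together with exactly two transversal intersections with $\mathrm{Fix}(G)$, i.e.\ excluding an orbit that drifts toward the boundary, limits onto a corner equilibrium, or winds asymptotically onto the neutrally stable center. The sign structure of $\phi,\psi,h$, the invariance of $\Omega$, and the energy identity above are exactly what is needed to close this gap, and establishing that identity rigorously (equivalently, exhibiting the second independent integral $\ln[r(1-r)]-\int h$) is the technical heart of the proof.
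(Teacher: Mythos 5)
Your proposal shares its skeleton with the paper's proof---both hinge on the involution $G:(x,y,r)\mapsto(x,y,1-r)$, on showing that every interior non-equilibrium orbit meets $\mathrm{Fix}(G)=\{r=\frac12\}$ in exactly two points, and on Lemma~\ref{theorem:periodicorbits} to conclude periodicity---but the way you force the two crossings is genuinely different. The paper partitions $\Omega$ into four regions by the planes $\{r=\frac12\}$ and $\{(1+\theta_1)x+(1+\theta_2)y=2\}$, reads off the signs of $\dot x,\dot y,\dot r$ in each region, and uses a lower bound on $|\dot r|$ together with the invariance of $\Omega$ to drive the orbit through the cycle $\Omega_2\to\Omega_3\to\Omega_4\to\Omega_1$. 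You instead exhibit the first integral $I_1(x,y)$, which is correct: $\partial_xI_1=\psi(x)/[x(1-x)]$ and $\partial_yI_1=-\phi(y)/[y(1-y)]$, so the common factor $(1-2r)$ cancels and $\dot I_1\equiv 0$; combined with the identity $\frac{d}{dt}\ln[r(1-r)]=(1-2r)h(x,y)$ this confines each orbit to a compact loop on an invariant cylinder and pins down the \emph{exactly two} turning points analytically. This buys more than the published argument delivers: nestedness of the orbits around the equilibrium segment, an explicit foliation of $\Omega$, and an analytic rather than qualitative certificate that the orbit closes up. Two cautions before you can call it complete. First, ``the orbit would approach the boundary, contradicting the invariance of $\Omega$'' is not by itself a contradiction---an open positively invariant set can have orbits whose $\omega$-limit lies on its boundary---so the non-entrapment step must rest entirely on the quantitative estimates you sketch afterwards; in particular you must also dispose of the case in which $(x,y)$ stalls at an interior point of $\gamma_c$ with $h\le 0$ (there $r$ is nonincreasing, hence $1-2r$ is bounded below by a positive constant, hence $\dot x$ is bounded below by a positive constant and $x$ cannot converge, which closes that case). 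Second, the potential $V=\int h/[x(1-x)\phi]$ must be taken along $\gamma_c$, i.e.\ with $y=y_c(x)$ substituted, for your divergence claims at the two ends of $\gamma_c$ to make sense; once that is done they are correct ($h\to\theta_1+\theta_2>0$ near $(1,1)$ and $h\to-2<0$ near $(0,0)$ against logarithmically divergent weights). With those details written out your argument is sound and, if anything, tighter than the paper's own.
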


\begin{proof}
First we divide the phase space $ \Omega=(0,1)^{3} $ into 4 regions by the two planes $ \{r=\frac{1}{2}\} $ and $ \{(1+\theta_{1})x+(1+\theta_{2})y=2\} $, which intersect at the line  of the interior equilibrium points.  Denote the four regions by
	\[ \begin{aligned}
	&\Omega_{1}: \{\dfrac{1}{2}<r<1, (1+\theta_{1})x+(1+\theta_{2})y>2 \};\\  &\Omega_{2}: \{\dfrac{1}{2}<r<1, (1+\theta_{1})x+(1+\theta_{2})y<2 \}; \\
	&\Omega_{3}: \{0<r<\dfrac{1}{2}, (1+\theta_{1})x+(1+\theta_{2})y<2\};\\  &\Omega_{4}: \{0<r<\dfrac{1}{2}, (1+\theta_{1})x+(1+\theta_{2})y>2 \}.\\ 
	\end{aligned} \]
From (\ref{eqn:cosys3}) it is easy to determine that the sign of the derivative $ \dot{r} $ is positive in $ \Omega_{1} $, $ \Omega_{4} $ and negative in $ \Omega_{2} $, $ \Omega_{3} $, respectively. Furthermore, the signs of the derivatives $ \dot{x},\dot{y} $ are negative  in $ \Omega_{1} $, $ \Omega_{2} $ and positive in $ \Omega_{3} $, $ \Omega_{4} $,\\
	Now, we consider an arbitrary trajectory $ \varphi(t,0,q) $ starting from a point $ q=(x_0,y_0,r_0) $ in  $ \Omega_{2} $.
	By using Lemma 5 one can exclude the trajectory going towards the face, as a result it will always stay inside the cube. \\
	Then, in $ \Omega_{2} $ we choose a  closed  subset $ S =\{\frac{1}{2}\leq r \leq r_0\} \cap \Omega_{2} $.  It is clear that the derivative  $ \dot{r} $ in  subset $ S $  remains negative. As the subset $ S $ is compact and the function on the right-hand side of equation $ \dot{r} $ is continuous, there must exist a small number $ a>0 $ such that $ \dot{r}\leq -a $. 
    
If the trajectory cannot go across the plane $ \{r=\frac{1}{2} \}$, then the projection of  $ \varphi(t,0,q) $ to the $ r $ axis at time $ \tau $ will be 
	\begin{equation*}
	r(\tau)=r(0)+\int_{0}^{\tau}\dot{z}dt \leq r(0)-a(\tau-0)
	\end{equation*}
	Then when the time $ \tau $ goes to  infinity, we get 
	\begin{equation*}
	\lim\limits_{\tau\rightarrow \infty}r(\tau) \leq r(0)-\lim\limits_{\tau\rightarrow \infty}a(\tau-0)= -\infty
	\end{equation*}
This result contradicts the boundedness of the subset. Therefore, the trajectory will traverse the plane $ \{r=\frac{1}{2} \}$ and always goes into the region $ \Omega_{3} $.\\
	Following  similar steps, one can verify the trajectory will cross the plane $ \{(1+\theta_{1})x+(1+\theta_{2})y=2\} $ by checking the signs of $ \dot{x} $ or $ \dot{y} $, and  enters into the region $ \Omega_{4} $, and then it continues to cross the plane $ \{r=\frac{1}{2}\} $ again. The geometric view of the whole process is depicted in Fig. (3).
	
In view of Lemma \ref{theorem:periodicorbits}, the trajectory will finally return to the starting point and form a closed orbit  because it intersects  $ Fix(G) $ at precisely two points. Hence, we have proved that each trajectory will form a  periodic orbit. Immediately one can claim that every closed orbit is neutrally stable since no other trajectories will converge to it.
\end{proof}

\begin{figure}[thpb]
	\centering
	\includegraphics[width=3in]{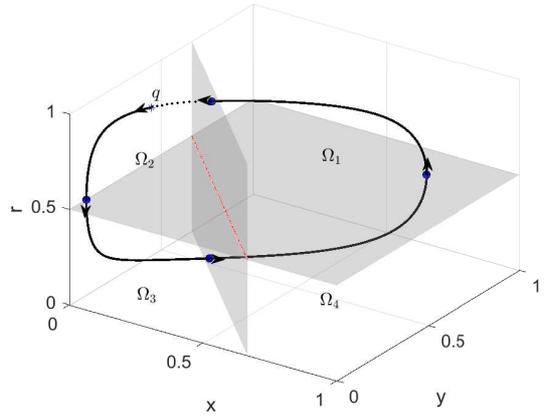}
	
	\caption{The process of forming a periodic orbit with initial point $q$.}
	\label{fig:periodo2}
\end{figure}

Then we can deduce the property of the internal equilibrium points.
\begin{corollary}
	The interior equilibrium points are neutrally stable for the nonlinear system (\ref{eqn:cosys3}).
\end{corollary}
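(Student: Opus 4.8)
The plan is to deduce the corollary directly from the preceding theorem, which has already established that every trajectory in $\Omega=(0,1)^3$ not starting at an equilibrium is a closed periodic orbit encircling an interior equilibrium point. Recall that \emph{neutral stability} of a fixed point means it is Lyapunov stable (nearby trajectories remain nearby for all time) but not asymptotically stable (nearby trajectories do not converge to it). Both properties follow once one controls how the family of periodic orbits behaves in a neighborhood of a prescribed interior equilibrium $P^{\ast}=(x^{\ast},y^{\ast},\tfrac12)$.

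For Lyapunov stability, the first step is to note that the interior equilibria form the whole line $L=\{(1+\theta_1)x+(1+\theta_2)y=2,\ r=\tfrac12\}$, so a perturbation of $P^{\ast}$ splits into a component along $L$, which merely relocates us to a neighboring equilibrium, and a transverse component, which by the preceding theorem places the trajectory on a closed orbit centered at some equilibrium $\tilde{P}$ near $P^{\ast}$. I would then argue that these orbits shrink continuously onto the equilibrium line as the initial condition approaches it: given $\varepsilon>0$, one chooses $\delta>0$ so small that every orbit meeting the $\delta$-ball about $P^{\ast}$ is a closed curve of diameter less than $\varepsilon$ centered within $\varepsilon/2$ of $P^{\ast}$. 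Such a $\delta$ exists because the reversible-symmetric orbit construction depends continuously on where the orbit crosses $\mathrm{Fix}(G)=\{r=\tfrac12\}$. Consequently every trajectory starting within $\delta$ of $P^{\ast}$ remains within $\varepsilon$ for all time, which is precisely Lyapunov stability.

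For the failure of asymptotic stability, the second step is immediate: since the trajectories in a punctured neighborhood of $P^{\ast}$ are genuinely periodic, they return arbitrarily close to their starting points and therefore cannot converge to $P^{\ast}$; hence $P^{\ast}$ attracts no nearby trajectory. Combining the two steps yields neutral stability, in agreement with the linearization in (\ref{eqn:jacobian}), whose eigenvalues $0$ and $\pm\sqrt{-K}\,i$ already preclude both hyperbolic attraction and repulsion.

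The step I expect to be the main obstacle is making rigorous the uniform shrinking of the orbits along the degenerate direction associated with the zero eigenvalue $\lambda_1=0$, i.e. the direction tangent to the equilibrium line $L$. Because $P^{\ast}$ is not isolated, the familiar ``small closed orbit around an isolated center'' argument does not apply directly; one must show that the continuous dependence of the symmetric periodic orbits on their two intersection points with $\mathrm{Fix}(G)$ is strong enough to bound the orbit diameter \emph{uniformly} over a whole neighborhood of $P^{\ast}$ on $L$. Establishing this uniform control of the foliation by periodic orbits — as opposed to the mere existence of each individual orbit, which the preceding theorem already supplies — is the technical heart of the argument.
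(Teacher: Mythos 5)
Your proposal follows essentially the same route as the paper: the corollary is deduced directly from the theorem that the interior is foliated by periodic orbits. In fact the paper's own justification is a single sentence (``by showing all the trajectories around the equilibria are independently periodic orbits, it is intuitive to get the neutral stability''), so your write-up is considerably more careful than the original — in particular, your separation of the perturbation into a component along the equilibrium line $L$ and a transverse component, and your observation that periodicity alone already rules out asymptotic stability, are both correct and are exactly what the paper leaves implicit.

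The one substantive point is the obstacle you flag yourself: Lyapunov stability does \emph{not} follow merely from every nearby trajectory being a closed orbit; one needs the orbits' diameters to shrink uniformly as the initial condition approaches $L$, and neither your proposal nor the paper actually proves this. So the gap is real, but it is a gap you share with (and, unlike the paper, explicitly acknowledge in) the original. If you want to close it, note that dividing the first two equations of (\ref{eqn:cosys3}) eliminates the factor $(1-2r)$ and yields a separable equation for $dx/dy$, hence an explicit first integral $H(x,y)$; trajectories are confined to level sets of $H$, and a second conserved quantity on each level set (obtained the same way against the $r$-equation) pins each orbit to the intersection of two level surfaces depending continuously on the initial condition, which gives the uniform control you need.
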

By showing all the trajectories around the equilibria are independently periodic orbits, it is intuitive to get the neutrally stability of the corresponding equilibrium points.

The periodic trajectory of system (\ref{eqn:cosys3}) presents an unusual phenomenon, namely the state of two populations and environment oscillate dynamically. Under the payoff matrices of (10) and (11), the population profiles with the environment can evolve periodically without the extinction of any specify type strategic players.
The occurrence of closed orbits is more complicated than convergence to limit cycles or equilibria, because how the co-evolutionary dynamics evolve depends highly  on the system's initial states.

{\begin{remark}
When the payoff matrices are symmetric, namely $ A(r)=B(r) $, the dynamics of the resulting system will be mainly similar to the asymmetric case, but has another  feature, namely  there is a symmetric plane $ \{x=y \}$ and no trajectory can go across this plane. 

It can be easily verified that the new system  is invariant under the simple transformation $ x\rightarrow y $ and $ y\rightarrow x $. Thus the system is symmetric with respect to  the plane $ \{x=y\} $.
If there is a trajectory $\phi(t,0, x_0, y_0, r_0)$, there is always a symmetric trajectory $\phi(t, 0, y_0, x_0, r_0)$ for this system. If the trajectory  $\phi(t,0, x_0, y_0, r_0)$ crosses the plane $\{x=y \}$ at time $t_1$, it is easy to see that  at the same time $t_1$ the symmetric trajectory will also cross the plane $\{x=y\}$ at the same point. This result contradicts the fact that no two trajectories can cross  each other for an autonomous dynamical system. Hence, as a result no trajectory can go across the symmetric plane.
The intuitive interpretation of this property is that the proportion of cooperators in one population cannot exceed that of the other if there are fewer cooperators in this population in the beginning.
\end{remark}

\section{CONCLUSIONS AND FUTURE WORKS}

We have proposed a new framework to study multi-population evolutionary game dynamics with environmental feedback and investigated whether and how convergence and coexistence take place in the new closed-loop system. The influence of dynamic and asymmetric payoff matrices have been studied in depth for two interacting populations. Two situations, convergence and dynamically coexistence, have been clarified through different approaches.
In particular, the scenarios where oscillation offers the best predictions of long-term behavior have been identified for the prisoner dilemma  game. The obtained results can be very useful to describe the evolution of multi-community societies in which individuals' payoffs and societal feedback interact.
In the future, we will generalize the framework to  multi-population situation. We are also interested in looking into networked populations. It is also of great interest to study different payoff matrices.




\end{document}